\def\BibTeX{{\rm B\kern-.05em{\sc i\kern-.025em b}\kern-.08em
    T\kern-.1667em\lower.7ex\hbox{E}\kern-.125emX}}
\newtheorem{theorem}{\indent Theorem}
\newtheorem{lemma}{\indent Lemma}
\newtheorem*{proof}{\indent Proof}
\newtheorem{remark}{\indent Remark}
\newtheorem{corollary}{\indent Corollary}
\begin{document}

\title{Digital versus Analog Transmissions for Federated Learning over Wireless Networks
 \thanks{This work was supported in part by National Key R\&D Program of China (Grant No. 2023YFB2904804), the NSFC under grants 62022026~and 62211530108, and the Fundamental Research Funds for the Central~Universities 2242022k60002 and 2242023K5003. (Corresponding author: Wei~Xu)}
}
\vspace{-0.5cm}
\author{\IEEEauthorblockN{Jiacheng~Yao\IEEEauthorrefmark{1},
Wei~Xu\IEEEauthorrefmark{1}\IEEEauthorrefmark{2},
Zhaohui~Yang\IEEEauthorrefmark{3},
Xiaohu~You\IEEEauthorrefmark{1}\IEEEauthorrefmark{2}, Mehdi~Bennis\IEEEauthorrefmark{4}, and
H. Vincent Poor\IEEEauthorrefmark{5}}
\IEEEauthorblockA{\IEEEauthorrefmark{1}National Mobile Communications Research Laboratory, Southeast University, Nanjing, China}
\IEEEauthorblockA{\IEEEauthorrefmark{2}Purple Mountain Laboratories, Nanjing, China}
\IEEEauthorblockA{\IEEEauthorrefmark{3}College of Information Science and Electronic Engineering, Zhejiang University, Hangzhou, China}
\IEEEauthorblockA{\IEEEauthorrefmark{4}Center for Wireless Communications, University of Oulu, Oulu, Finland}
\IEEEauthorblockA{\IEEEauthorrefmark{5}Department of Elect. \& Computer Eng., Princeton University, Princeton, NJ, USA}
\IEEEauthorblockA{Emails: \{jcyao,wxu\}@seu.edu.cn, yang\_zhaohui@zju.edu.cn}
\vspace{-0.9cm}}

\maketitle

\begin{abstract}
In this paper, we quantitatively compare these two effective communication schemes, i.e., digital and analog ones, for wireless federated learning (FL) over resource-constrained networks, highlighting their essential differences as well as their respective application scenarios. We first examine both digital and analog transmission methods, together with a unified and fair comparison scheme under practical constraints.  A universal convergence analysis under various imperfections is established for FL performance evaluation in wireless networks. These analytical results reveal that the fundamental difference between the two paradigms lies in whether communication and computation are jointly designed or not. The digital schemes decouple the communication design from specific FL tasks, making it difficult to support simultaneous uplink transmission of massive devices with limited bandwidth. In contrast, the analog communication allows over-the-air computation (AirComp), thus achieving efficient spectrum utilization. However, computation-oriented analog transmission reduces power efficiency, and its performance is sensitive to computational errors.  Finally, numerical simulations are conducted to verify these theoretical observations.
\end{abstract}

\begin{IEEEkeywords}
Federated learning (FL), digital communication, over-the-air computation (AirComp), convergence analysis.
\end{IEEEkeywords}

\vspace{-0.2cm}

\section{Introduction}
To support emerging intelligent applications in anticipated sixth-generation (6G) mobile networks, federated learning (FL) is viewed as a promising distributed learning technique that has garnered significant attention from academic and industrial circles, primarily due to its communication-efficient and privacy-enhancing characteristics \cite{6g1,mzchen,xu,xujindan}.  Recent studies have explored the implementation of FL algorithms at the wireless edge \cite{wfl1, energy, wf2}. However, limited communication resources pose a significant bottleneck to the performance of wireless FL \cite{mz2}. One particular concern regards the uplink transmission process, where numerous participating devices need to transmit local updates to the parameter sever (PS), leading to a substantial increase in communication overhead and latency \cite{mz3}. Hence, the development of an efficient uplink transmission scheme is crucial to enable wireless FL.

Digital communication schemes have been widely considered in recent works to enable the wireless FL \cite{quan1,quan2,gomore}, where the local updates are quantized into finite bits and then transmitted to the PS via traditional multiple access methods. At the receiver, the PS relies on channel coding for error detection and correction, before model aggregation based on the received accurate local updates. In addition to the traditional digital communications, analog communication paradigm is an alternative communication-efficient way for the implementation of wireless FL \cite{poor,zhu}. In particular, the local updates are amplitude-modulated and then transmitted reusing the available radio resource simultaneously. Due to the superposition property of radio channels, the global model can be computed automatically over-the-air, which is therefore referred to as over-the-air computation (AirComp). Analog communication pushes model aggregation from the PS to the air, which not only functionally but physically integrates the computation and communication and substantially reduces the communication latency of wirless FL.

In general, by incorporating learning task-oriented design, both digital and analog transmissions are effective ways to fulfill the communication requirements of wireless FL.  In traditional communication for data transmission, digital communication schemes have been proven not only in theory but also in practice as dominatively outperforming analog techniques. In communications for computation tasks, however, analog communication has shown to be exceptionally effective in some cases of resource-constrained networks \cite{comp}. Hence, it is of interest to comprehensively compare the digital and analog transmissions for wireless FL. However, to the best of our knowledge, there is a lack of literature that presents a comprehensive comparison between the two fundamental communication paradigms, especially over the resource-constrained wireless FL networks. Also, there have been few attempts to elucidate the fundamental differences, which is crucial for effective deployment and detailed design.

Against this background, in this paper, we propose a unified framework for digital and analog transmissions in wireless FL and conduct a theoretical comparison between them with strict latency requirements and transmit power budgets. To accurately evaluate the performance, we establish a universal convergence performance analysis framework, which captures the impact of various factors, including additional noise, partial participation, coefficient distortion, and non-independent and non-identically distributed (non-IID) datasets. Analytical results reveal that the digital schemes are difficult to achieve satisfactory performance with the increasing edge devices under limited radio resources. In contrast, the analog scheme exhibits a higher level of efficiency in spectrum utilization. However, the introduction of computation goals in the communication process results in less efficient transmit power utilization, and the presence of channel state information (CSI) uncertainties inevitably comes with computation errors.

\section{Federated Learning Model}

We consider a typical wireless FL system, where $K$ distributed devices are coordinated by a central PS to perform FL. 
Let $\mathcal{D}_k$ denote the local dataset owned by the $k$-th device, which contains $D_k=\vert \mathcal{D}_k \vert$ training samples. The goal of the FL algorithm is to find the optimal $d$-dimensional model parameter vector, denoted by $\mathbf{w}^*\in \mathbb{R}^{d\times 1}$, to minimize the global loss function $F(\mathbf{w})$, i.e., 

\vspace{-0.5cm}
\begin{align}\label{eq1}
\mathbf{w}^* =\arg \min_{\mathbf{w}} F(\mathbf{w})=\arg \min_{\mathbf{w}} \sum_{k=1}^K \alpha_k F_k(\mathbf{w}),
\end{align}
where $\alpha_k$ represents the aggregation weight for device $k$, and $F_k(\mathbf{w})$ is the local loss function at device $k$ defined by
\begin{align}
F_k (\mathbf{w})= \frac{1}{D_k} \sum_{\mathbf{u}\in \mathcal{D}_k}	\mathcal{L}(\mathbf{w},\mathbf{u}),
\end{align}
where $\mathbf{u}$ denotes a training sample selected from $\mathcal{D}_k$, and $\mathcal{L}(\mathbf{w},\mathbf{u})$ represents the sample-wise loss function with respect to $\mathbf{u}$. We note that the different local datasets at distinct devices are usually non-IID, and the optimal model parameters in (\ref{eq1}) are not necessarily the optimal for local datasets. Let $\mathbf{w}_k^*$ denote the locally optimal model parameter vector at device $k$ and it is usually different from  $\mathbf{w}^*$. 

To handle (\ref{eq1}), an FL algorithm performs the model training iteratively. In the $m$-th round of the FL algorithm, firstly, the PS broadcasts the latest global model $\mathbf{w}_m$ to all the devices.  After receiving $\mathbf{w}_m$, each device exploits its local dataset to compute the local gradient as
\begin{align}\label{eq3}
\mathbf{g}_m^k\triangleq \nabla F_k(\mathbf{w}_{m})= \frac{1}{D_k} \sum_{\mathbf{u}\in \mathcal{D}_k}	\nabla \mathcal{L}(\mathbf{w}_m,\mathbf{u}), \enspace \forall k.
\end{align}
Upon receiving all the local gradients reported by each device, the PS calculates the estimated global gradient as 

\vspace{-0.5cm}
\begin{align}\label{gradient}
\mathbf{g}_m \triangleq \sum_{k=1}^K \alpha_k \mathbf{g}_m^k,
\end{align}
and then updates the global model as $\mathbf{w}_{m+1}=\mathbf{w}_m-\eta \mathbf{g}_m$, where $\eta$ is the learning rate. The above steps iterate until a convergence condition is met.

Considering a sufficient power budget at the PS, the downlink transmission is usually assumed error-free \cite{wfl1}. Otherwise, for uplink transmission with limited communication resources, additional errors are inevitable. Moreover, considering the limited resources in practice, only partial devices are activated at each round. Let $\mathcal{S}_m$ denote the set of the activated devices in round $m$ and $N=\vert \mathcal{S}_m\vert$ be the number of participating devices. We assume that the PS performs non-uniform sampling without replacement to select the participating devices. Then, we denote the inclusion probability of device $k$ by $r_k$, which represents probability of device $k$ being selected per round.

\section{Uplink Transmission and Comparison Method}

In wireless FL, we rely on the imperfect uplink transmission to provide an estimation of the gradient in (\ref{gradient}). Assuming that total uplink bandwidth $B$ is divided into $M$ subbands, which supports orthogonal access for up to $M(M\geq N)$ devices. 
Let $\bar{h}_k =L_k h_k$  be the channel between the $k$-th device and the PS, where $L_k$ denotes the large-scale path loss, and $h_k\sim \mathcal{CN}(0,1)$ represents the small-scale fading of the channel.
In practice, due to fast varying characteristic, perfect estimation of the small-scale fading channel is usually not available at the devices. Let $\hat{h}_k$ denote the channel estimate at device $k$. Then, we model the CSI imperfection of the small-scale fading as $h_k = \rho \hat{h}_k + \sqrt{1-\rho^2}v_k$, where $\rho\in(0,1]$ is correlation coefficient between $h_k$ and $\hat{h}_k$ and reflects the level of channel estimation accuracy, and $v_k\sim \mathcal{CN}(0,1)$ is the CSI error independent of $\hat{h}_k$. 

\subsection{Digital Transmission Method}
In the digital transmission, the $N$ selected devices first quantize their gradients into a finite number of $b$ bits and then transmit the quantized local updates to the PS. Specifically, we adopt the stochastic quantization method to quantize the local updates \cite{quan2}. 
Denote the maximum and the minimum values of the modulus among all parameters in $\mathbf{g}_m^k$ by $g_{\max}$ and $g_{\min}$, respectively. Then, the interval $[g_{\min}, g_{\max}]$ is divided evenly into $2^b-1$ quantization intervals denoted by $\tau_i$ for $i\in[2^b-1]$. Given $\vert x \vert \in [\tau_i,\tau_{i+1})$, the quantization function $\mathcal{Q}(x)$ is
\begin{align}\label{quan}
\mathcal{Q}(x)=\left\{ \begin{array}{cc}
\mathrm{sign}(x)\tau_i &\mathrm{w.p.} \enspace \frac{\tau_{i+1}-\vert x\vert}{\tau_{i+1}-\tau_i},\\
\mathrm{sign}(x)\tau_{i+1} &\mathrm{w.p.} \enspace \frac{\vert x\vert-\tau_i}{\tau_{i+1}-\tau_i},\\
\end{array} \right.
\end{align}
where $\mathrm{sign}(\cdot)$ represents the signum function and ``w.p." represents ``with probability." Exploiting the quantization function in (\ref{quan}), the local update $\mathbf{g}_m^k$ is quantized as $\mathcal{Q}\left(\mathbf{g}_m^k \right)\triangleq \left[\mathcal{Q}\left(g_{m,1}^k \right),\cdots,\mathcal{Q}\left(g_{m,d}^k \right)\right]^T$. Let $q$ denote the number of bits used to represent $g_{\min}$ and $g_{\max}$. Hence, the total number of bits needed to transmit the quantized gradients amounts to $b_{\mathrm{total}}=d(b+1)+q$.

Then, during the uplink transmission, each device occupies different carriers equally to avoid interference with each other. Hence, the channel capacity of device $k$ equals to
\begin{align}
C_k=B_k \log_2\left( 1+\frac{P_k\vert \bar{h}_k\vert^2}{B_k N_0} \right),
\end{align}
where $B_k=B/N$ is the bandwidth allocated to device $k$, $P_k$ is the transmit power at device $k$, and $N_0$ is the noise power density. To avoid the uncontrolled severe delay brought by stragglers, we assume that all devices transmit the local updates at a fixed rate $R=\frac{B}{N}\log_2(1+\theta)$, where $\theta$ is a chosen constant \cite{gomore}. According to \cite{wfl1}, the transmission is assumed error-free if $R\leq C_k$. Hence, the probability of successful transmission at device $k$ is calculated as
\begin{align}\label{eq9}
p_k=\Pr\left\{ R\leq C_k \right \}=\mathrm{exp} \left (-\frac{BN_0 \theta}{2NP_k L_k^{2}} \right).
\end{align}
At the PS, a cyclic redundancy check (CRC) mechanism is applied to check the detected data such that erroneous gradients can be excluded from the model aggregation. Finally, the obtained estimate of the desired gradient in (\ref{gradient}) is given by
\begin{align}\label{estd}
\hat{\mathbf{g}}_{m,\text{D}}= \sum_{k=1}^K \frac{\chi_k \alpha_k \xi_{k,\text{D}}}{r_k}\mathcal{Q}(\mathbf{g}_m^k),
\end{align}
where $\chi_k$ is an indicator variable for the device selection, $\frac{1}{r_k}$ is to guarantee the unbiasedness of gradient estimate, and $\xi_{k,\text{D}}$ represents distortion brought by packet loss. To be concrete, $\chi_k$ is 1 if $k\in \mathcal{S}_m$ and otherwise $\chi_k$ is 0. The distortion $\xi_{k,\text{D}}$ is characterized by the probability in (\ref{eq9}) as
\begin{align}\label{distortiond}
\xi_{k,\text{D}}=\left \{  
\begin{array}{cl}
\frac{1}{p_k} &\text{w.p.}\enspace p_k,\\
0&\text{w.p.}\enspace 1-p_k,
\end{array}
\right.
\end{align}
where $\xi_{k,\text{D}}$ satisfies that $\mathbb{E}\left[ \xi_{k,\text{D}}\right]=1$. With the gradient estimate in (\ref{estd}), the updated global model at the $(m+1)$-th round is reformulated as $\tilde{\mathbf{w}}_{m+1}= \tilde{\mathbf{w}}_{m}-\eta\hat{\mathbf{g}}_{m,\text{D}}$,
where $\tilde{\mathbf{w}}_{m}$ denotes the model obtained at the previous round.

\subsection{Analog Transmission Method}
In the analog transmission with AirComp, the selected devices simultaneously upload the uncoded analog signals of the local gradients to the PS and a weighted summation of the local updates in (\ref{gradient}) can be achieved over-the-air. Specifically, the received signal at the PS is expressed as
\begin{align}\label{eq13}
\mathbf{y}=\sum_{k=1}^K \chi_k\bar{h}_k  \beta_k \mathbf{g}_m^k +\mathbf{z}_m,
\end{align} 
where $\beta_k$ is the pre-processing factor at device $k$, and $\mathbf{z}_m$ is additive white Gaussian noise following $\mathcal{CN}(\mathbf{0},BN_0\mathbf{I})$. To accurately estimate the desired gradient in (\ref{gradient}), the pre-processing factor $\beta_k$ should be adapted to the channel coefficients based on the CSI available at each device. We adopt the typical truncated channel inversion scheme to combat the deep fades\cite{zhu}. It is expressed as
\begin{align}
    \beta_k=\left \{  
\begin{array}{cl}
\frac{\zeta \lambda \alpha_k \hat{h}_k^*}{ r_k L_k \vert \hat{h}_k \vert^2} & \vert \hat{h}_k \vert^2 \geq \gamma_{\mathrm{th}},\\
0&\vert \hat{h}_k \vert^2 <\gamma_{\mathrm{th}},
\end{array}
\right.
\end{align}
where $\gamma_{\mathrm{th}}$ is a predetermined power-cutoff threshold, $\zeta$ is a scaling factor for ensuring the transmit power constraint, and compensation coefficient $\lambda$ is selected to alleviate the impact of imperfect CSI \cite{imperfect}. 

At the receiver, the PS scales the real part of $\mathbf{y}$ in (\ref{eq13}) with $\frac{1}{\zeta}$ and obtain an estimate of the actual gradient in (\ref{gradient}). It yields
\begin{align} \label{ghata}
\hat{\mathbf{g}}_{m,\text{A}}=\sum_{k=1}^K \frac{\chi_k \alpha_k \xi_{k,\text{A}} }{r_k } \mathbf{g}_m^k+\bar{\mathbf{z}}_m,
\end{align}
where $\bar{\mathbf{z}}_m\triangleq \frac{\Re\{\mathbf{z}_m\}}{\zeta}$ is the equivalent noise, and $\xi_{k,\text{A}}$ denotes the distortion brought by the analog transmission with imperfect CSI. It follows
\begin{align}
\xi_{k,\text{A}} =\left \{  
\begin{array}{cl}
\frac{\lambda\Re\{h_k^* \hat{h}_k\}}{ \vert \hat{h}_k \vert^2} &  \text{w.p.} \enspace \mathrm{e}^{-\gamma_{\mathrm{th}}},\\
0&\text{w.p.}\enspace 1-\mathrm{e}^{-\gamma_{\mathrm{th}}}.
\end{array}
\right.
\end{align}
Similarly, the global model at the $(m+1)$-th round under the analog transmission is updated as
$\tilde{\mathbf{w}}_{m+1}= \tilde{\mathbf{w}}_{m}-\eta\hat{\mathbf{g}}_{m,\text{A}}$.

\subsection{A Unified Framework for Wireless FL Comparison}
To minimize the performance loss brought by imperfect uplink transmission, the overall FL task-oriented optimization over the wireless networks can be formulated as
\begin{align} \label{p1}
    \min \quad&\mathbb{E}\left[F(\mathbf{w}_{m+1})\right]-F(\mathbf{w}^*)\nonumber \\
    \text{s.t.}\quad  & \mathrm{C}_1:T\leq T_{\max},\nonumber \\
    & \mathrm{C}_2:P_k\leq P_{\max},\enspace \forall k,
\end{align}
where $T$ represents uplink transmission delay per round, $T_{\max}$ and $P_{\max}$ denote the maximum transmission delay target and the transmit power unit, respectively. 

For a fair comparison between the two transmission paradigms, we measure the achievable optimality gap in (\ref{p1}) under the same transmission delay target and transmit power budget. Specific constraints for the two transmission paradigms are listed as follows. For digital transmission, the transmission delay per communication round is calculated as $T_{\text{D}}\!=\!\frac{b_{\mathrm{total}}}{R}\!=\! \frac{Nd(b+1)}{B\log_2(1+\theta)}$. Hence, constraint $\mathrm{C}_1$ is reformulated as $\theta\geq 2^{\frac{Nd(b+1)}{BT_{\max}}}-1$. For constraint $\mathrm{C}_2$, due to its interference-free characteristic, full power transmission is optimal and hence the constraint is reformulated by $P_k=P_{\max}, \enspace \forall k$.

For analog transmission, according to \cite[Eq. (16)]{deploy}, the per-round delay for the analog transmission follows $T_{\text{A}}=dM/B$, which is a constant. For feasibility, we assume that the target $T_{\max}$ cannot be smaller than $T_{\text{A}}$. The maximum power constraint $\mathrm{C}_2$ is rewritten as $\max_{m,k}\left \{ \left \Vert \beta_k \mathbf{g}_m^k \right \Vert^2 \right \} \leq P_{\max}$ for the analog transmission. Unlike the digital transmission, it is not possible to fully utilize the maximum power in analog transmission due to the need for channel pre-equalization.

\section{Comparison with Convergence Analysis}
In this section, we analyze the convergence performance under the digital and analog transmissions with the practical constraints for wireless FL. Based on the derived results, we further conduct detailed comparisons between the two paradigms from various perspectives of view.
\subsection{Preliminaries}

To pave the way for performance analysis, we provide necessary assumptions and lemmas about the learning algorithms and the transmission paradigms. Firstly, we make several common assumptions on the loss functions, which are widely used in FL studies, like \cite{wfl1,importance}.

\emph{Assumption 1}: The local loss functions $F_k(\cdot)$ are $\mu$-strongly convex for any device $k$.

\emph{Assumption 2}: The local loss functions $F_k(\cdot)$ are differentiable and have $L$-Lipschitz gradients.

\emph{Assumption 3}: In most practical applications, it is safe to assume that the sample-wise gradient is always upper bounded by a finite constant $\gamma$, i.e., $\left \Vert \nabla \mathcal{L}(\mathbf{w},\mathbf{u}) \right \Vert \leq \gamma$.

\emph{Assumption 4}: The distance between the locally optimal model, $\mathbf{w}_k^*$, and the globally optimal model, $\mathbf{w}^*$, is uniformly bounded by  a finite constant $\delta$, i.e., $\Vert \mathbf{w}_k^*- \mathbf{w}^* \Vert\leq \delta$, which directly captures the degree of non-IID.

Building upon the above assumptions, we immediately conclude that the global loss function $F(\cdot)$ is also $\mu$-strongly convex and $L$-smooth. Then, we provide the following lemma regarding the imperfection in digital and analog transmissions.

\begin{lemma}
    Under the stochastic quantization and the proposed digital aggregation in (\ref{estd}), $\hat{\mathbf{g}}_{m,\text{D}}$ is an unbiased estimate of the actual gradient in (\ref{gradient}). For the considered analog paradigm in (\ref{ghata}), by choosing $\lambda = \frac{e^{\gamma_{\mathrm{th}}}}{\rho}$, the gradient estimate $\hat{\mathbf{g}}_{m,\text{A}}$ is also unbiased.
\end{lemma}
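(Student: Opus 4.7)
The plan is to show each estimator is unbiased by taking expectations term by term, exploiting the mutual independence of the three sources of randomness in play: the device selection indicators $\chi_k$, the transmission/channel-induced distortions $\xi_{k,\cdot}$, and (in the digital case) the stochastic quantizer. In both cases the target identity reduces to showing $\mathbb{E}[\chi_k\xi_{k,\cdot}/r_k]=1$, after which $\mathbb{E}[\hat{\mathbf{g}}_{m,\cdot}]=\sum_k\alpha_k\mathbf{g}_m^k=\mathbf{g}_m$ follows by linearity.

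For the digital estimator in (\ref{estd}), I would first observe that the stochastic quantizer in (\ref{quan}) is unbiased by direct computation: for $|x|\in[\tau_i,\tau_{i+1})$, $\mathbb{E}[\mathcal{Q}(x)]=\mathrm{sign}(x)\left(\tau_i\frac{\tau_{i+1}-|x|}{\tau_{i+1}-\tau_i}+\tau_{i+1}\frac{|x|-\tau_i}{\tau_{i+1}-\tau_i}\right)=x$, so that $\mathbb{E}[\mathcal{Q}(\mathbf{g}_m^k)]=\mathbf{g}_m^k$ componentwise. Next, $\mathbb{E}[\chi_k]=r_k$ by the definition of the inclusion probability, and $\mathbb{E}[\xi_{k,\text{D}}]=1$ is already stated below (\ref{distortiond}). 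Since $\chi_k$, $\xi_{k,\text{D}}$, and $\mathcal{Q}(\mathbf{g}_m^k)$ are mutually independent (the first depends only on the scheduler, the second only on the channel realization, the third only on the quantizer's internal randomness), the product's expectation factorizes and gives the claim.

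For the analog estimator in (\ref{ghata}) the only new ingredient is $\mathbb{E}[\xi_{k,\text{A}}]$. I would substitute the CSI model $h_k=\rho\hat{h}_k+\sqrt{1-\rho^2}\,v_k$ to get
\begin{align}
\Re\{h_k^*\hat{h}_k\}=\rho|\hat{h}_k|^2+\sqrt{1-\rho^2}\,\Re\{v_k^*\hat{h}_k\},
\end{align}
so that conditional on $|\hat{h}_k|^2\ge\gamma_{\mathrm{th}}$,
\begin{align}
\mathbb{E}\!\left[\tfrac{\lambda\Re\{h_k^*\hat{h}_k\}}{|\hat{h}_k|^2}\,\Big|\,|\hat{h}_k|^2\ge\gamma_{\mathrm{th}}\right]=\lambda\rho,
\end{align}
using independence of $v_k$ from $\hat{h}_k$ and $\mathbb{E}[v_k]=0$. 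Multiplying by the activation probability $\mathrm{e}^{-\gamma_{\mathrm{th}}}$ (coming from $|\hat{h}_k|^2\sim\mathrm{Exp}(1)$) yields $\mathbb{E}[\xi_{k,\text{A}}]=\lambda\rho\,\mathrm{e}^{-\gamma_{\mathrm{th}}}$; the choice $\lambda=\mathrm{e}^{\gamma_{\mathrm{th}}}/\rho$ then makes this equal to $1$. Finally, $\bar{\mathbf{z}}_m$ is zero-mean, and repeating the factorization argument (with $\chi_k$, $\xi_{k,\text{A}}$, and the deterministic local gradient $\mathbf{g}_m^k$) yields $\mathbb{E}[\hat{\mathbf{g}}_{m,\text{A}}]=\mathbf{g}_m$.

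The main obstacle, such as it is, lies in handling the analog distortion carefully: one must not confuse the unconditional expectation of the ratio $\Re\{h_k^*\hat{h}_k\}/|\hat{h}_k|^2$ with its truncated counterpart, and one must verify that the $e^{-\gamma_{\mathrm{th}}}$ factor appearing in the definition of $\xi_{k,\text{A}}$ is consistent with $\hat{h}_k\sim\mathcal{CN}(0,1)$ (so that $|\hat{h}_k|^2$ is exponential with mean $1$) rather than with a different normalization. Everything else is bookkeeping.
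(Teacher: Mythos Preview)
Your proposal is correct and follows essentially the same route as the paper: show the quantizer is unbiased, show $\mathbb{E}[\xi_{k,\cdot}]=1$ (with $\mathbb{E}[\chi_k]=r_k$), and factorize by independence. The only difference is cosmetic---the paper outsources the quantizer unbiasedness and the computation of $\mathbb{E}[\xi_{k,\text{A}}]$ to cited lemmas, whereas you carry out both computations explicitly, which makes your argument more self-contained.
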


\begin{proof}
    For the digital case, according to \emph{Lemma 5} in \cite{quan1}, we first conclude that the quantized gradients $\mathcal{Q}(\mathbf{g}_m^k)$ is unbiased, i.e., $\mathbb{E}\left [ \mathcal{Q}(\mathbf{g}_m^k)\right ]=\mathbf{g}_m^k$. Combining with the fact that $\mathbb{E}\left[\xi_{k,\text{D}}\right]=1$ in (\ref{distortiond}), we have
    \begin{align}\label{eq47}
        \mathbb{E}\left[\hat{\mathbf{g}}_{m,\text{D}}\right]=\sum_{k=1}^K \alpha_k
        \mathbb{E}\left[\frac{\chi_k}{r_k}\right]\mathbb{E}\left[\xi_{k,\text{D}}\right]\mathbb{E}\left [ \mathcal{Q}(\mathbf{g}_m^k)\right ]=\mathbf{g}_m.
    \end{align}
    As for the analog transmission, by directly exploiting the results in \emph{Lemma 1} of our previous work \cite{imperfect}, we have $\mathbb{E}\left[ \xi_{k,\text{A}}\right]=1$. Combining with the statistical characteristic of $\chi_k$ and $\bar{\mathbf{z}}_m$ and following the same procedures in (\ref{eq47}), we have $\mathbb{E}\left[\hat{\mathbf{g}}_{m,\text{A}}\right]=\mathbf{g}_m$. The proof completes. \hfill $\square$
\end{proof}

Although both the digital and analog transmissions achieve unbiased gradient estimations, there are fundamental differences in the distortion between the two paradigms. For the digital transmission, the distortion mainly lies in the gradients themselves, i.e., gradient quantization errors. On the other hand, due to the integration of communication and computation in AirComp, the analog transmission additionally suffers from computation errors due to the imperfect channel alignment. This essential difference further discriminates the performances of the two paradigms in specific scenarios, which are elaborated in the following.

\subsection{Convergence Analysis and Detailed Comparison}

Based on the \emph{Assumptions 1-4} and \emph{Lemma 1}, we characterize the convergence performance under different transmission paradigms in the following theorem.

\begin{figure*}[t]
\begin{align}
\mathbb{E}&\left[ F(\tilde{\mathbf{w}}_{m+1})\right]-F(\mathbf{w}^*) \leq \frac{L}{2}\left( 1-\eta \mu +2\eta^2 L^2g_{\text{D}}(\mathbf{r},b)\right)^{m+1} \mathbb{E}\left[\left \Vert \tilde{\mathbf{w}}_{0} -\mathbf{w}^*\right \Vert^2 \right] +\frac{\eta (L\phi(b) +2L^3\delta^2) g_{\text{D}}(\mathbf{r},b)}{2\mu-4\eta L^2 g_{\text{D}}(\mathbf{r},b)},\label{digital}\\
\mathbb{E}&\left[ F(\tilde{\mathbf{w}}_{m+1})\right]-F(\mathbf{w}^*)\leq \frac{L}{2}\left( 1-\eta \mu +2\eta^2 L^2g_{\text{A}}(\mathbf{r},\gamma_{\mathrm{th}})\right)^{m+1} \mathbb{E}\left[\left \Vert \tilde{\mathbf{w}}_{0} -\mathbf{w}^*\right \Vert^2 \right] +\frac{\eta \left(L\varphi(\mathbf{r},\gamma_{\mathrm{th}})+2L^3 \delta^2 g_{\text{A}}(\mathbf{r},\gamma_{\mathrm{th}}) \right)}{2\mu-4\eta L^2 g_{\text{A}}(\mathbf{r},\gamma_{\mathrm{th}})}. \label{analog}
\end{align}
\hrulefill
\vspace{-0.4cm}
\end{figure*}

\begin{theorem}
For a fixed learning rate satisfying $\eta \leq \frac{\mu}{2L^2 g_{\text{D}}(\mathbf{r},b)}$, the optimality gap of the distributed gradient update in the $(m+1)$-th iteration under the digital transmission is equal to (\ref{digital}) at the top of the next page, where $\phi(b)$ is a constant regarding the quantization errors, $\mathbf{r}\triangleq [r_1,\cdots,r_K]^T$, and $g_{\text{D}}(\mathbf{r},b)\triangleq \sum_{k=1}^K \frac{\alpha_k}{p_k r_k}$. 

For a fixed learning rate satisfying $\eta \leq \frac{\mu}{2L^2 g_{\text{A}}(\mathbf{r},\gamma_{\mathrm{th}})}$, the optimality gap of the distributed gradient update in the $(m+1)$-th iteration under the analog transmission is equal to (\ref{analog}) at the top of the next page, where $g_{\text{A}}(\mathbf{r},\gamma_{\mathrm{th}})\triangleq \sum_{k=1}^K \frac{\alpha_k}{r_k}\left(e^{\gamma_{\mathrm{th}}}+\frac{(1-\rho^2)\mathrm{E}_1\left(\gamma_{\mathrm{th}} \right)e^{2\gamma_{\mathrm{th}}}}{2\rho^2}\right)-1$ and $\varphi(\mathbf{r},\gamma_{\mathrm{th}})\triangleq \frac{B N_0 \gamma^2 e^{2\gamma_{\mathrm{th}}}}{2P_{\max}\rho^2 \gamma_{\mathrm{th}}}\max_k\left \{\frac{\alpha_k^2}{r_k^2 L_k^2}\right\}$.
\end{theorem}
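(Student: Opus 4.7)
The plan is to establish a per-round contraction on $\mathbb{E}[\|\tilde{\mathbf{w}}_m-\mathbf{w}^*\|^2]$, unroll it as a geometric series, and then convert back to the optimality gap via $L$-smoothness at $\mathbf{w}^*$. Since the digital and analog updates share the common skeleton $\tilde{\mathbf{w}}_{m+1}=\tilde{\mathbf{w}}_m-\eta\hat{\mathbf{g}}_m$ and, by Lemma~1, both $\hat{\mathbf{g}}_{m,\text{D}}$ and $\hat{\mathbf{g}}_{m,\text{A}}$ are unbiased, the argument is identical up to the second-moment bound on $\hat{\mathbf{g}}_m$, which is precisely where the paradigm-specific quantities $g_{\text{D}}$, $g_{\text{A}}$, $\phi(b)$, and $\varphi(\mathbf{r},\gamma_{\mathrm{th}})$ enter.

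I would first expand $\|\tilde{\mathbf{w}}_{m+1}-\mathbf{w}^*\|^2$, take conditional expectation given $\tilde{\mathbf{w}}_m$, and use Lemma~1 so that the cross term collapses to $-2\eta\langle\nabla F(\tilde{\mathbf{w}}_m),\tilde{\mathbf{w}}_m-\mathbf{w}^*\rangle$, which is controlled by $\mu$-strong convexity (Assumption~1). To translate individual $\|\nabla F_k(\tilde{\mathbf{w}}_m)\|^2$ terms into quantities involving $\|\tilde{\mathbf{w}}_m-\mathbf{w}^*\|^2$, I would combine $L$-Lipschitz gradients (Assumption~2) applied at the local optimum $\mathbf{w}_k^*$ with the non-IID bound $\|\mathbf{w}_k^*-\mathbf{w}^*\|\leq\delta$ (Assumption~4); this is exactly the mechanism that produces the $\delta^2$ contributions in (\ref{digital})--(\ref{analog}). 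The remaining task is then to upper bound $\mathbb{E}[\|\hat{\mathbf{g}}_m\|^2]$ in the form $A(\cdot)\|\tilde{\mathbf{w}}_m-\mathbf{w}^*\|^2+B(\cdot)$, whose coefficients will respectively feed the contraction factor and the steady-state noise floor.

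This second-moment bookkeeping is the main obstacle, and it is where the two paradigms diverge. For the digital case I would decompose the randomness in $\hat{\mathbf{g}}_{m,\text{D}}$ into three independent sources: Bernoulli sampling ($\chi_k$, with $\mathbb{E}[\chi_k^2]=r_k$), packet-loss distortion ($\xi_{k,\text{D}}$, with second moment $1/p_k$ from (\ref{distortiond})), and stochastic quantization (whose coordinatewise variance is controlled via (\ref{quan}) and Assumption~3 and is absorbed into the constant $\phi(b)$). These combine to produce the common factor $g_{\text{D}}(\mathbf{r},b)=\sum_k\alpha_k/(p_k r_k)$ on both the signal-proportional and $\delta^2$ contributions. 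For the analog case I would instead invoke Lemma~1 of \cite{imperfect} to obtain the second moment of $\xi_{k,\text{A}}$, which under the compensation $\lambda=e^{\gamma_{\mathrm{th}}}/\rho$ produces the $e^{\gamma_{\mathrm{th}}}$ and $(1-\rho^2)\mathrm{E}_1(\gamma_{\mathrm{th}})e^{2\gamma_{\mathrm{th}}}/(2\rho^2)$ pieces of $g_{\text{A}}$, and separately account for the receiver noise $\bar{\mathbf{z}}_m$, whose variance after the $1/\zeta$ rescaling is pinned down by the peak-power constraint $\mathrm{C}_2$, Assumption~3, and the worst-case ratio $\max_k\{\alpha_k^2/(r_k^2 L_k^2)\}$, yielding the stand-alone term $\varphi(\mathbf{r},\gamma_{\mathrm{th}})$. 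The structural asymmetry worth tracking is that $\phi(b)$ is scaled by $g_{\text{D}}$ in the digital bound, while $\varphi(\mathbf{r},\gamma_{\mathrm{th}})$ appears \emph{un}scaled by $g_{\text{A}}$, reflecting that the analog noise enters additively at the aggregator rather than multiplicatively per device.

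Finally I would assemble the pieces into a linear recursion $\mathbb{E}[\|\tilde{\mathbf{w}}_{m+1}-\mathbf{w}^*\|^2]\leq q\,\mathbb{E}[\|\tilde{\mathbf{w}}_m-\mathbf{w}^*\|^2]+\eta^2 c$ with contraction factor $q=1-\eta\mu+2\eta^2 L^2 g_*$, strictly less than one under the stated step-size hypothesis, and paradigm-dependent noise level $c$. Unrolling gives a geometric $q^{m+1}\mathbb{E}[\|\tilde{\mathbf{w}}_0-\mathbf{w}^*\|^2]$ term plus $\eta^2 c/(1-q)$; substituting $1-q=\eta\mu-2\eta^2 L^2 g_*$ produces the $1/(2\mu-4\eta L^2 g_*)$ denominator visible in (\ref{digital})--(\ref{analog}). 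A closing application of $F(\tilde{\mathbf{w}}_{m+1})-F(\mathbf{w}^*)\leq (L/2)\|\tilde{\mathbf{w}}_{m+1}-\mathbf{w}^*\|^2$, which holds because $\nabla F(\mathbf{w}^*)=\mathbf{0}$ and $F$ is $L$-smooth, accounts for the leading $L/2$ in both bounds and closes the proof.
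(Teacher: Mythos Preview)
Your proposal is correct and follows essentially the same route as the paper: the paper also derives a one-step contraction on $\mathbb{E}[\|\tilde{\mathbf{w}}_m-\mathbf{w}^*\|^2]$ by combining $L$-smoothness at $\mathbf{w}^*$, $\mu$-strong convexity, the unbiasedness from Lemma~1, the bound $\|\nabla F_k(\tilde{\mathbf{w}}_m)\|^2\leq 2L^2\|\tilde{\mathbf{w}}_m-\mathbf{w}^*\|^2+2L^2\delta^2$ (Assumptions~2 and~4), and the paradigm-specific second moments of $\xi_{k,\cdot}$, then unrolls the recursion. The only cosmetic difference is that the paper introduces the auxiliary point $\hat{\mathbf{w}}_{m+1}=\tilde{\mathbf{w}}_m-\eta\mathbf{g}_m$ and tracks the variance $\mathbb{E}[\|\hat{\mathbf{g}}_m-\mathbf{g}_m\|^2]$ (with the $\|\nabla F(\tilde{\mathbf{w}}_m)\|^2$ piece supplied separately by \cite[Lemma~2]{importance}), whereas you expand $\|\tilde{\mathbf{w}}_{m+1}-\mathbf{w}^*\|^2$ directly and bound the full second moment $\mathbb{E}[\|\hat{\mathbf{g}}_m\|^2]$; under unbiasedness these decompositions are equivalent.
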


\begin{proof}
Please refer to the Appendix. \hfill $\square$
\end{proof}

According to \emph{Theorem 1},  we are ready to derive the optimality gap after convergence in the following corollary, which reflects the ultimately achievable convergence performance.

\begin{corollary}
With sufficient rounds, the optimality gap achieved by digital and analog transmissions respectively converges to
\begin{align}
G_{\text{D}}&=\frac{\eta (L \phi(b) +2L^3\delta^2) g_{\text{D}}(\mathbf{r},b)}{2\mu-4\eta L^2 g_{\text{D}}(\mathbf{r},b)},\nonumber \\
G_{\text{A}}&= \frac{\eta \left(L\varphi(\mathbf{r},\gamma_{\mathrm{th}})+2L^3 \delta^2 g_{\text{A}}(\mathbf{r},\gamma_{\mathrm{th}}) \right)}{2\mu-4\eta L^2 g_{\text{A}}(\mathbf{r},\gamma_{\mathrm{th}})}.
\end{align}
\end{corollary}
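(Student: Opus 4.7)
The plan is to derive Corollary 1 as a direct asymptotic consequence of Theorem 1, so the proof proposal reduces to a careful justification of taking $m\to\infty$ in the two bounds (\ref{digital}) and (\ref{analog}). The key observation is that both bounds have the structure
\begin{align}
\mathbb{E}[F(\tilde{\mathbf{w}}_{m+1})]-F(\mathbf{w}^*)\;\leq\; \tfrac{L}{2}\bigl(1-\eta\mu+2\eta^2L^2 g\bigr)^{m+1}\mathbb{E}\bigl[\Vert\tilde{\mathbf{w}}_0-\mathbf{w}^*\Vert^2\bigr] + G,\nonumber
\end{align}
where $g$ stands for $g_{\text{D}}(\mathbf{r},b)$ or $g_{\text{A}}(\mathbf{r},\gamma_{\mathrm{th}})$ and $G$ is the corresponding residual constant. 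So the only task is to show the geometric factor raised to $m+1$ vanishes, leaving exactly $G_{\text{D}}$ or $G_{\text{A}}$.

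First, I would verify that the learning-rate hypothesis stated in Theorem 1, $\eta\leq\mu/(2L^2 g)$, implies that the contraction factor lies in $[0,1)$. Writing $1-\eta\mu+2\eta^2L^2 g = 1-\eta(\mu-2\eta L^2 g)$, the hypothesis gives $\mu-2\eta L^2 g\geq 0$, so the factor is at most $1$; assuming the strict inequality (which is also what makes the denominator $2\mu-4\eta L^2 g$ in $G$ strictly positive and thus $G$ well-defined), the factor is strictly less than $1$. Nonnegativity then follows for any sufficiently small $\eta$, and in any case only an eventual $[0,1)$ bound on the factor is needed for the geometric term to decay.

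Passing $m\to\infty$ in both (\ref{digital}) and (\ref{analog}) kills the first summand, and the second summand is constant in $m$; reading off these constants gives exactly the claimed expressions for $G_{\text{D}}$ and $G_{\text{A}}$. Since Theorem 1 has already done all the work, no new analytical ingredient is required. The only step deserving attention is the sign check on $2\mu-4\eta L^2 g$ and on the contraction factor, which is genuinely routine. Thus there is no real obstacle here; the corollary is essentially a reformulation of Theorem 1 in the limit, and the proof will occupy only a few lines.
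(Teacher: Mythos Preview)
Your proposal is correct and matches the paper's approach: the paper gives no explicit proof of Corollary~1, treating it as an immediate consequence of Theorem~1 obtained by letting $m\to\infty$ so that the geometric term vanishes under the stated learning-rate condition. Your sign check on the contraction factor and on $2\mu-4\eta L^2 g$ is exactly the routine verification that makes this limit valid.
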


From \emph{Corollary 1}, we further compare the two paradigms from the following perspectives and conclude the insightful remarks that are instructive for the deployment of FL in practice.
Without loss of generality, we drop the unbalance of the datasets and assume uniform inclusion probabilities, i.e., $\alpha_k =\frac{1}{K}$, and $r_k=\frac{N}{K}$, $\forall k$. Also, we set that $T_{\max}=T_{\text{A}}$. As a summary, we present the main comparison results in Table~\ref{table:2}.

\begin{table*}[!t]   
\renewcommand\arraystretch{1.1}
\begin{center}   
\caption{Main Comparison Results with Respect To Achievable Optimality Gap}  
\label{table:2} 
\begin{tabular}{|m{1.5cm}<{\centering}|m{3cm}<{\centering}|m{3cm}<{\centering}|m{4.5cm}<{\centering}|m{3cm}<{\centering}|}   
\hline   
\multirow{2}*{Paradigms} &  \multicolumn{2}{c|}{Transmit power budget, $P$} & \multirow{2}*{Device number, $N$} & \multirow{2}*{Imperfect CSI, $\rho$} \\
\cline{2-3}
&Low SNR & High SNR & &\\
\hline 
Digital & $\mathcal{O}\left( \mathrm{exp}\left(\frac{\varepsilon}{P}\right)\right)$ $\searrow$ & $\rightarrow$ $G_{\text{D}}^{\infty}$ & $\mathcal{O}\left(\frac{1}{N}\mathrm{exp}(\varepsilon_1 2^{\varepsilon_2N}/N)\right)$  $\nearrow$& / \\
\hline
Analog & $\mathcal{O}\left(\frac{1}{P}\right)$ $\searrow$ &$\rightarrow$ $G_{\text{A}}^{\infty}$ & $\mathcal{O}\left(\frac{1}{N}\right)$ $\searrow$ & $\mathcal{O}\left(1/\rho^2\right)$ $\nearrow$  \\
\hline
\end{tabular}   
\end{center} 
\vspace{-0.5cm}
\end{table*}

\subsubsection{Impact of Transmit Power}
With low signal-to-noise ratio (SNR) levels, we note that the achievable optimality gap under the digital transmission, $G_{\text{D}}$, vanishes as $\mathcal{O}\left(\mathrm{exp}\left({\varepsilon}/{P_{\max}}\right)\right)$ with the increase of the maximum transmit power budget $P_{\max}$, where $\varepsilon\triangleq \max_k \left \{ \frac{B N_0 \theta}{2N L_k^{2}}\right \}$. Especially for high SNR regime, i.e., $P_{\max}\to \infty$, $G_{\text{D}}$ tends to
\begin{align}\label{infd}
G_{\text{D}}^{\infty}\triangleq \lim_{P_{\max}\to \infty} G_{\text{D}}=\frac{\eta (L \phi(b) +2L^3\delta^2)K}{2\mu N-4\eta L^2K}.
\end{align}
On the other hand, the decay rate for $G_{\text{A}}$ is $\mathcal{O}\left({1}/{P_{\max}}\right)$ with low SNR values and the limiting value follows
\begin{align}\label{infa}
G_{\text{A}}^{\infty}\triangleq \lim_{P_{\max}\to \infty} G_{\text{A}}=\frac{2\eta L^3\delta^2\left(Kc-N\right)}{2\mu N-4\eta L^2\left(Kc-N\right)},
\end{align}
where $c\triangleq e^{\gamma_{\mathrm{th}}}+\frac{(1-\rho^2)\mathrm{E}_1\left(\gamma_{\mathrm{th}} \right)e^{2\gamma_{\mathrm{th}}}}{2\rho^2}$.

\begin{remark}
At high SNR levels, the optimality gap for the analog case comes from only the non-IID datasets while the impact of the noise asymptotically diminishes. For the digital case, however, quantization errors additionally imposes a significant impact, which can be overcome through high-precision quantization. Moreover, under the analog transmission, the negative impact of non-IID datasets is enlarged due to imperfect AirComp. Imperfect CSI results in mismatched channel inversion in AirComp, rendering perfect computation of weighted sum impossible. Hence, with inaccurate channel estimation, the ultimate performance achieved by the analog transmission may not be comparable to that by the digital case. 
\end{remark}

\subsubsection{Impact of Device Number}
With the increasing number of participating devices, $N$,  the impact of non-IID datasets asymptotically dominates the performance under the analog transmission, and $G_{\text{A}}$ decays by the order of $\mathcal{O}(1/N)$. Due to the involvement of more devices, a more accurate global gradient is obtained at the PS, which in turn facilitates the convergence of FL algorithms and leads to better performance. Meanwhile, since different devices involved in the AirComp share the same time-frequency resource, more access devices do not cause deterioration of the transmission performance, thus fully obtaining the performance gain from more devices. Hence, allowing all the active devices to participate in FL training is the optimal choice for analog transmission.

On the other hand, for the digital case, convergence performance does not necessarily monotonically change with $N$. Although more participating devices do bring performance gains, considering that limited communication resources and orthogonal access, it leads to a significant deterioration of the transmission performance, thus further affecting the convergence. Specifically, the optimality gap, $G_{\text{D}}$, enlarges with a rate of $\mathcal{O}\left(\frac{1}{N}\mathrm{exp}(\varepsilon_1 2^{\varepsilon_2 N}/N)\right)$ with sufficiently large $N$, where $\varepsilon_1 =\frac{BN_0}{2PL_K^{2}}$ and $\varepsilon_2=\frac{b+1}{M}$.
Hence, it is necessary to seek a balance between the transmission performance and diversity gain through the optimization of~$N$.

\subsubsection{Impact of Imperfect CSI}
The imperfect CSI at the transmitter only affects the performance of analog transmission, which deteriorates at the order of $\frac{1}{\rho^2}$. Due to the imperfect CSI, the aggregation computation and the truncation decision in AirComp are disturbed, thus leading to a mismatch in the model aggregation and the impact of noise amplification. 

\begin{remark}
After incorporating computation capabilities into the analog case, the emergence of computation error as a new source of error has positioned computational accuracy as a crucial factor affecting the convergence performance. It is concluded that CSI is a key factor affecting the performance gain brought by AirComp.
\end{remark}


\begin{figure*}[!t]
	\centering
	\begin{minipage}[t]{0.31\linewidth}
		\centering
		\includegraphics[width=1\linewidth]{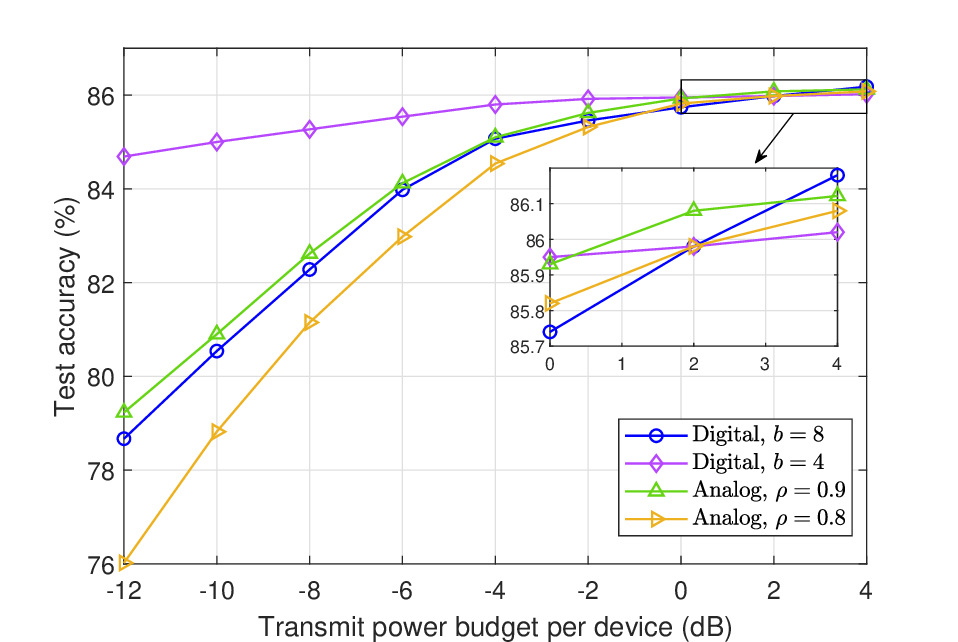}
		\caption{Test accuracy versus $P_{\max}$.}\label{fig4}
	\end{minipage}
	\begin{minipage}[t]{0.31\linewidth}
		\centering
		\includegraphics[width=1\linewidth]{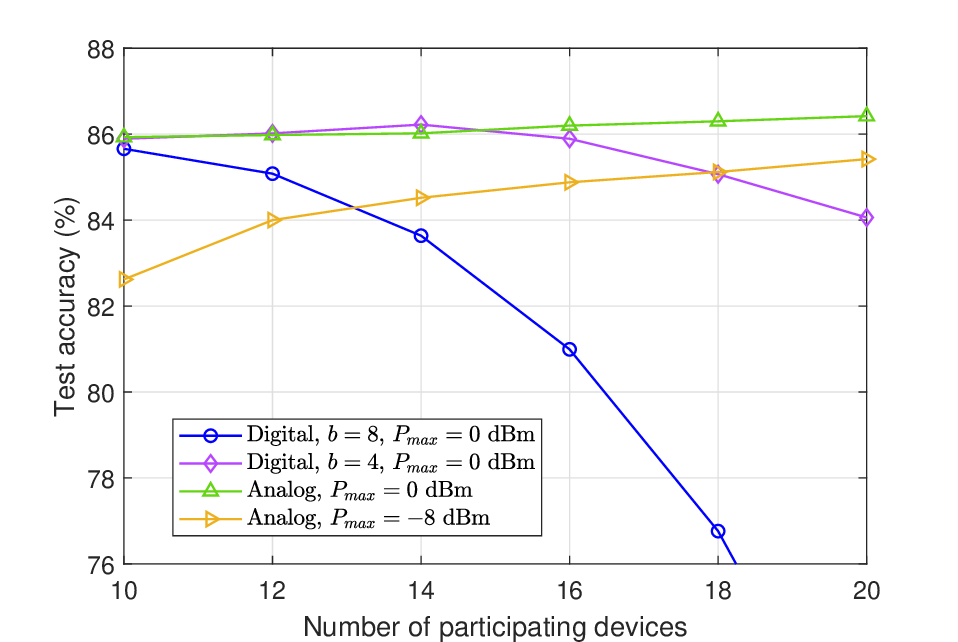}
		\caption{Test accuracy versus $N$.}\label{fig5}
	\end{minipage}
	\begin{minipage}[t]{0.31\linewidth}
		\centering
		\includegraphics[width=1\linewidth]{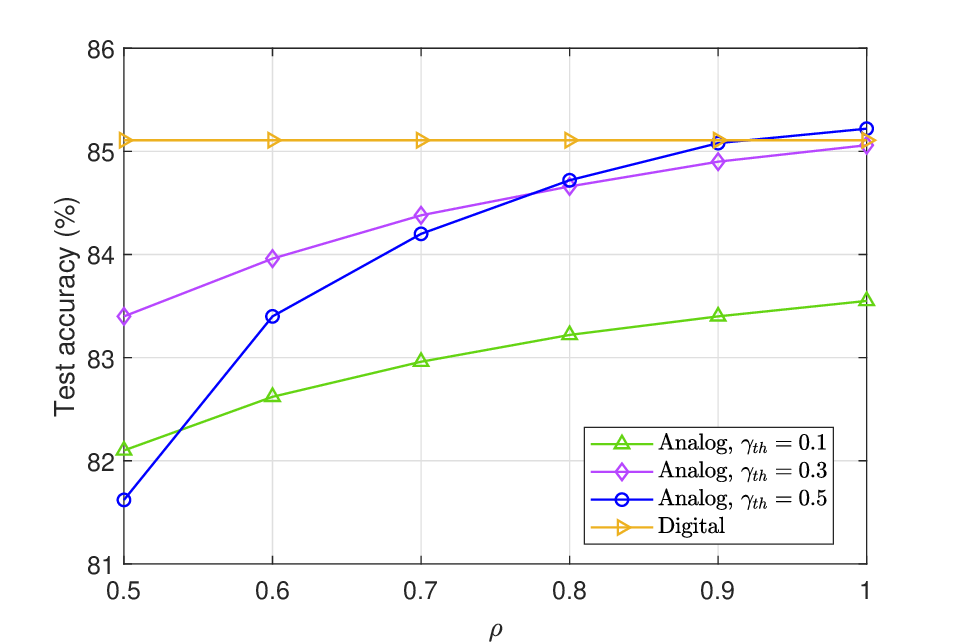}
		\caption{Test accuracy versus $\rho$.}\label{fig6}
	\end{minipage}
 \vspace{-0.4cm}
\end{figure*}

\vspace{-0.3cm}
\section{Numerical Results}
In this section, we provide simulation results to verify the performance analysis. A multi-layer perceptron (MLP) with $d=23860$ parameters is trained on MNIST dataset. Unless otherwise specified, the other parameters are set as: the number of edge devices $K=20$, the number of activated devices $N=10$, the bandwidth $B=1$ MHz,  the noise power $N_0=-80$~dBm/Hz, the maximum transmit power budget $P_{\max}=0$ dB, the number of quantization bits $b=8$, the truncation threshold $\gamma_{\mathrm{th}}=0.5$, and the learning rate $\eta=0.01$.

In Fig. \ref{fig4}, we show the test accuracy versus the maximum transmit power. It is observed that the digital transmission outperforms the analog one, particularly in situations with high SNR levels. In such cases, employing more quantization bits yields the best performance. Conversely, for low SNR levels, reducing the quantization bits leads to marginal performance loss, highlighting the flexibility of the digital schemes by selecting different quantization accuracies. On the other hand, the analog scheme faces significant performance limitations, particularly with less CSI, due to the stringent requirements of channel alignment. Hence, in terms of power utilization, the digital scheme is more efficient than the analog counterpart. 

Fig. \ref{fig5} illustrates the test accuracy versus the number of participating devices. We note that for analog transmission, the test accuracy gradually increases as $N$ increases. In contrast, although the performance in digital case may be improved initially, it eventually declines rapidly as each device can only occupy a limited amount of resources, making it unable to support high-rate transmission. Consequently, the results suggest that for digital transmission, the selection of $N$ requires further optimization, with a preference for fewer devices.

In Fig. \ref{fig6}, we present the impact of channel estimation accuracy on the analog case. It is evident that better performance can be achieved with more accurate CSI. 
Additionally, we observe that smaller truncation thresholds are more suitable for larger $\rho$, while larger truncation thresholds are preferred for smaller $\rho$. This is because higher CSI uncertainties have a significant impact on truncation choices, necessitating looser truncation conditions to reduce incorrect choices.

\section{Conclusion}

In this paper, we considered the general transmission deign of both digital and analog schemes for wireless FL, analyzed the convergence behavior of FL in terms of the optimality gap and compared the convergence from multiple perspectives. It was found that  digital scheme is more suitable for scenarios with sufficient radio resources and CSI uncertainties. On the other hand, analog scheme is suitable when their are massive numbers of participating devices.

\appendices
\section{}
To begin with, we define an auxiliary variable as $\hat{\mathbf{w}}_{m+1} \triangleq \tilde{\mathbf{w}}_m -\eta \mathbf{g}_m$.
Hence, $\tilde{\mathbf{w}}_{m+1}$ is a unbiased estimation of $\hat{\mathbf{w}}_{m+1}$. Combining with \emph{Assumption~2} and the fact that $\nabla F(\mathbf{w}^*)\!=\!0$, we split the optimality gap $G_m\triangleq \mathbb{E}\left[ F(\tilde{\mathbf{w}}_{m+1})\right]-F(\mathbf{w}^*)$ into
\begin{small}
\begin{align}\label{ap1}
G_m\!\leq \frac{L}{2}\underbrace{ \mathbb{E} \left [ \left\Vert \tilde{\mathbf{w}}_{m\!+\!1}\!-\!\hat{\mathbf{w}}_{m\!+\!1} \right \Vert^2\right]}_{A_1}+\frac{L}{2}\underbrace{\mathbb{E} \left [ \left\Vert \hat{\mathbf{w}}_{m\!+\!1}\!-\!\mathbf{w}^* \right \Vert^2\right]}_{A_2}.
\end{align}
\end{small}
For the term $A_2$, according to \emph{lemma 2}  in \cite{importance}, we have
\begin{align}\label{a2}
A_2\!\leq\!\left( 1\!-\!\eta \mu \right)\mathbb{E}\left[\left \Vert \tilde{\mathbf{w}}_m \!-\!\mathbf{w}^*\right \Vert^2 \right]\!+\! \eta^2 \mathbb{E}\left [ \Vert\nabla F(\tilde{\mathbf{w}}_m)\Vert^2 \right ].
\end{align}
For the term $A_1$ in digital case, it is bounded by
\begin{small}
\begin{align}\label{a1}
A_1\!=&\eta^2 \mathbb{E}\left [\left \Vert \hat{\mathbf{g}}_{m,\text{D}} \!-\!\mathbf{g}_m \right \Vert^2\right]\leq \eta^2\underbrace{ \sum_{k=1}^K \alpha_k \mathbb{E}\left[\left \Vert \frac{\chi_k\xi_{k,\text{D}}}{ r_k} \mathcal{Q}(\mathbf{g}_m^k)\!-\!\mathbf{g}_{m}^k\right \Vert^2 \right ]}_{B_1}\nonumber \\
&+\eta^2\underbrace{ \sum_{k=1}^K \alpha_k \mathbb{E}\left[\left \Vert \mathbf{g}_{m}^k-\sum_{i=1}^K \alpha_i\mathbf{g}_m^i \right \Vert^2 \right ]}_{B_2}.
\end{align}
\end{small}
In analog case, the result only differs in term $B_1$. For the common part $B_2$, by expanding the square term, we have
\begin{align}\label{b2}
B_2=\sum_{k=1}^K \alpha_k  \mathbb{E}\left[\left \Vert \nabla F_k(\tilde{\mathbf{w}}_m)\right \Vert^2 \right]-\mathbb{E}\left[\left \Vert \nabla F(\tilde{\mathbf{w}}_m)\right \Vert^2 \right].
\end{align}
For digital case, according to \cite{quan2}, the variance of quantization error is bounded as
\begin{align}
\mathbb{E}\left [\left \Vert\mathcal{Q}(\mathbf{g}_m^k)-\mathbf{g}_m^k\right \Vert^2 \right ] &\leq \frac{d}{4}\left(\frac{ g_{\max}-g_{\min}}{2^b-1}\right)^2\leq \phi(b).
\end{align}
Then, $B_1$ is bounded by
\begin{align}\label{b1}
B_1\!\leq \! \sum_{k=1}^K \frac{\phi(b) \alpha_k}{p_k r_k}\!+\!\sum_{k=1}^K \! \alpha_k \! \left( \frac{1}{p_k r_k}\!-\!1\right)B_3,
\end{align}
where $B_3\triangleq \mathbb{E}\left[ \left \Vert \nabla F_k(\tilde{\mathbf{w}}_m) \right\Vert^2\right]$. For analog case, we have
\begin{align}
B_1\!=\!\sum_{k=1}^K\! \alpha_k \mathbb{E}\left[ \left(\frac{\chi_k\xi_{k,\text{A}}}{r_k}\!-\!1\right)^2\right]\! B_3
\! +\! \mathbb{E}\left[\Vert \bar{\mathbf{z}}_m \Vert^2 \right],
\end{align}
which can be calculated according to the statistics of $\xi_{k,\text{A}}$ in \cite{imperfect}.
Next, we further bound $B_3$ by
\begin{align}\label{final}
B_3\leq 2 L^2 \mathbb{E}\left[ \left \Vert \tilde{\mathbf{w}}_m-\mathbf{w}^* \right \Vert^2 \right ] +2L^2 \delta^2.
\end{align} 
Detailed procedure is omitted due to space limitations.
Combining all the results in (\ref{ap1})-(\ref{final}), we complete the proof.

\vspace{-0.1cm}

\end{document}